\newtheorem{theorem}{Theorem}[section]
\newtheorem{definition}[theorem]{Definition}
\newenvironment{proof}[1][Proof]{\begin{trivlist}
\item[\hskip \labelsep {\bfseries #1}]}{\end{trivlist}}
\newcommand{\qed}{\nobreak \ifvmode \relax \else
	\ifdim\lastskip<1.5em \hskip- \lastskip
	\hskip 0.5em plus0em minus0.5em \fi \nobreak
	\vrule height0.75em width0.5em depth0.25em\fi}
\begin{document}

\title{On the existence of marginally trapped tubes in spacetimes with local rotational symmetry}

\author{Abbas M. \surname{Sherif}}
\email{abbasmsherif25@gmail.com}
\affiliation{Cosmology and Gravity Group, Department of Mathematics and Applied Mathematics, University of Cape Town, Rondebosch 7701, South Africa}

\begin{abstract}
Let $M$ be a locally rotationally symmetric spacetime with at least one of the rotation or spatial twist being non-zero. It is proved that $M$ cannot admit a non-minimal marginally trapped tube of the form $\chi=X(t)$.
\end{abstract}
\maketitle

\section{introduction}

\label{intro}

Black holes have become a topical research area over the last decade. Different numerical and analytical schemes have been setup to study the local and global dynamics of these objects, both in the astrophysical and in more abstract contexts \cite{hay3,ash1,ash2,ash3,ash7,ash8,ash9,sen1,sen3,sen5,ib1,ib2,ib3,ib4}. The local dynamics of these objects can be tracked by the evolution of their associated horizons, the so-called dynamical horizon (and more generally, marginally trappe tubes) developed by Ashtekar and others (see the references \cite{ash1,ash2,ash3}). Their existence and uniqueness in some general setting was extensively dealt with in \cite{ash3} by Ashtekar and Galloway, which established constraints on the locations and occurence of dynamical horizons.

Indeed, relating the causal character of marginally trapped tubes to explicit physical/geometrical quantities in the spacetimes is of huge interest to both mathematical and numerical relativists. A relatively recent appproach to dealing with spacetimes admitting a unit tangent directing and some preferred spatial directiion normal to the tangent one, seems a good candidate for such project. This so-called \(1+1+2\) formalism \cite{pg1,cc1,cc2,cc3} characterizes a spacetime in a covariant manner just as the \(1+3\) formalism, but with new scalar, vector and tensor quantities defined on the \(2\)-space. The first use of this formalism to analyze black hole horizons was by Ellis \textit{et al.} \cite{rit1}, where considerations were given to an astrophysical collapse scenario. In \cite{shef1} and \cite{shef2}, Sherif and coauthors extended the results in \cite{rit1} to more general spacetimes, both using an original approach, as well as adapting an approach by Booth \textit{et al.} (see \cite{ib3,ib4} and references therein), to the \(1+1+2\) covariant variables. A series of established results were obtained in a relatively easy manner, as well as a couple of original results like a classifications scheme of marginally trapped tubes up to diffeomorphisms and causal characters. For specific examples, the authors considered spacetimes in the class of locally rotationally symmetric spacetimes with no rotation and no spatial twist.

In this paper however, the class of locally rotationally symmetric spacetimes where at least one of the rotation or spatial twist is non-vanishing is considered. It is shown that for these spacetimes, any marginally trapped tube of the form \(\chi=X(t)\), will have both null expansion scalars (ingoing and outgoing) vanishing on them.

The paper is structured as follows: in Section \ref{2}, the \(1+1+2\) spacetime decomposition method employed in this work is briefly introduced, and provide some definitions of utility to this paper is provided. Section \ref{3} states and prove the main result of the paper, and then in Section \ref{4} the paper is concluded with discussion of the result.

\section{Preliminaries}\label{2}

Let (\(M,g_{ab}\)) be a \(4\)-dimensional spacetime, and to any timelike congruence, let there be associated a unit vector field \(u^a\) tangent to the congruence with \(u^au_a=-1\). Given any \(4\)-vector \(U^a\) in the spacetime, one may split \(U^a\) as
\begin{eqnarray*}
U^a&=&Uu^a + U^{\langle a \rangle },
\end{eqnarray*}
where \(U\) is the scalar along \(u^a\) and \(U^{\langle a \rangle }\) is the projected trace-free \(3\)-vector \cite{ggff2}. The splitting splits \(g_{ab}\) into components associated with the \(u^a\) and spatial directions as

\begin{eqnarray*}
h_a^{\ b}\equiv g_a^{\ b}+u_au^b,
\end{eqnarray*}
where \(h_{ab}\) project any \(3\)-vector to the \(3\)-space. This naturally gives rise to two derivatives:
\begin{itemize}
\item The \textit{covariant time derivative} (which we shall henceforth call the dot derivative)  along the observers' congruence. Given any tensor \(S^{a..b}_{\ \ \ \ c..d}\), we have \(\dot{S}^{a..b}_{\ \ \ \ c..d}\equiv u^e\nabla_eS^{a..b}_{\ \ \ \ c..d}\).

\item Fully orthogonally \textit{projected covariant derivative} \(D\) with the tensor \(h_{ab}\), with the total projection carried out on all the free indices. Given any tensor \(S^{a..b}_{\ \ \ \ c..d}\), we have \(D_eS^{a..b}_{\ \ \ \ c..d}\equiv h^a_{\ f}h^p_{\ c}...h^b_{\ g}h^q_{\ d}h^r_{\ e}\nabla_rS^{f..g}_{\ \ \ \ p..q}\).
\end{itemize}
This \(1+3\) splitting of then splits the covariant derivative of \(u^a\) as

\begin{eqnarray}\label{mmmn}
\nabla_au_b=-u_a\mathcal{A}_b+\frac{1}{3}h_{ab}\Theta+\sigma_{ab},
\end{eqnarray}
where we have defined \(\mathcal{A}_a=\dot{u}_a\) as the acceleration vector, \(\Theta\equiv D_au^a\) as the expansion and \(\sigma_{ab}=D_{\langle b}u_{a\rangle}\) as the shear tensor. 

The energy momentum tensor decompose as

\begin{eqnarray}
T_{ab}=\rho u_au_b + 2q_{(a}u_{b)} +ph_{ab} + \pi_{ab},
\end{eqnarray}
where \(\rho\equiv T_{ab}u^au^b\) defines the energy density, the heat flux is given by \(q_a=-h_a^{\ c}T_{cd}u^d\), \(p\equiv\left(1/3\right)h^{ab}T_{ab}\) is the isotropic pressure and \(\pi_{ab}\) is the anisotropic stress tensor.

Suppose there exists a spatial direction \(e^a\) (with \(e_ae^a=1\)) which is orthogonal to \(u^a\). The metric \(g_{ab}\) can be split into terms along the \(u^a\) and \(e^a\) directions (the vector field \(e^a\) splits the \(3\)-space), as well as on the \(2\)-surface, i.e. 

\begin{eqnarray}
g_{ab}=N_{ab}-u_au_b+e_ae_b,
\end{eqnarray} 
where the tensor \(N_{ab}\) projects any two vector orthogonal to \(u^a\) and \(e^a\) onto the \(2\)-surface called the sheet (\(N^{\ \ a}_a=2, u^aN_{ab}=0, \ e^aN_{ab}=0\)). This is the \(1+1+2\) splitting. This gives rise to the further splitting of the covariant derivatives along the \(e^a\) direction and on the \(2\)-surface:
\begin{itemize}
\item The \textit{hat derivative} is the spatial derivative along the vector field \(e^a\): Given a \(3\)-tensor \(\psi_{a..b}^{\ \ \ \ c..d}\), we have that \(\hat{\psi}_{a..b}^{\ \ \ \ c..d}\equiv e^fD_f\psi_{a..b}^{\ \ \ \ c..d}\).

\item The \textit{delta derivative} is the projected spatial derivative on the \(2\)-sheet (projection by the tensor \(N_a^{\ b}\)), and the projection is carried out on all the free indices: Given any \(3\)-tensor \(\psi_{a..b}^{\ \ \ \ c..d}\), we have the delta derivative as \(\delta_e\psi_{a..b}^{\ \ \ \ c..d}\equiv N_a^{\ f}..N_b^{\ g}N_h^{\ c}..N_i^{\ d}N_e^{\ j}D_j\psi_{f..g}^{\ \ \ \ h..i}\).
\end{itemize} 
Further details for the splitting procedure and all of the resulting field equations can be found in \cite{cc1}.

Let us now define locally rotationally symmetric spacetimes.

\begin{definition}
A \textbf{locally rotationally symmetric (LRS)} spacetime is a spacetime in which at each point \(p\in M\), there exists a continuous isotropy group generating a multiply transitive isometry group on \(M\) \cite{ggff2}. The general metric of LRS spacetimes is given by

\begin{eqnarray}\label{jan29191}
\begin{split}
ds^2&=-A^2dt^2 + B^2d\chi^2 + F^2 dy^2 \\
&+ \left[\left(F\bar{D}\right)^2+ \left(Bh\right)^2 - \left(Ag\right)^2\right]dz^2\\
&+ \left(A^2gdt - B^2hd\chi\right)dz,
\end{split}
\end{eqnarray}
where \(A^2,B^2,F^2\) are functions of \(t\) and \(\chi\), \(\bar{D}^2\) is a function of \(y\) and \(k\) (\(k\) fixes the geometry of the \(2\)-surfaces), and \(g,h\) are functions of \(y\). 
\end{definition}
In the limiting case that \(g=h=0\) we recover the well known spherically symmetric LRS II class of spacetimes which generalizes spherically symmetric solutions to the Einstein field equations. 

The complete set of \(1+1+2\) covariant scalars fully describing the LRS class of spacetimes are \cite{cc1}
\begin{eqnarray*}
\lbrace{A,\Theta,\phi, \Sigma, \mathcal{E}, \mathcal{H}, \rho, p, \Pi, Q, \Omega, \xi\rbrace}. 
\end{eqnarray*}
The quantity \(\phi\equiv\delta_ae^a\) is the sheet expansion, \(\Sigma\equiv\sigma_{ab}e^ae^b\) is the scalar associated with the shear tensor \(\sigma_{ab}\), \(\mathcal{E}\equiv E_{ab}e^ae^b\) is the scalar associated with the electric part of the Weyl tensor \(E_{ab}\), \(\mathcal{H}\equiv H_{ab}e^ae^b\) is the scalar associated with the magnetic part of the Weyl tensor \(\mathcal{H}_{ab}\), \(\Pi\equiv\pi_{ab}e^ae^b\) is the anisotropic stress scalar, \(Q\equiv -e^aT_{ab}u^b=q_ae^a\) is the scalar associated to the heat flux vector \(q_a\). The quantities \(\xi\) and \(\Omega\) are the spatial twist and rotation scalar respectively, which are defined by \(\xi=\left(1/2\right)\varepsilon^{ab}\delta_ae_b\) (where \(\varepsilon_{ab}\equiv \varepsilon_{abc}e^c= u^d\eta_{dabcd}e^c\) is the levi civita \(2\)-tensor, the volume element of the \(2\)-surface) and \(\Omega=e^a\omega_a\) (where \(\omega^a=\Omega e^a + \Omega^a\) is the rotation vector, with \(\Omega^a\) being the sheet component of \(\omega^a\)).

The full covariant derivatives of the vector fields \(u^a\) and \(e^a\) are given by \cite{cc1}
\begin{subequations}\label{4}
\begin{align}
\nabla_au_b&=-\mathcal{A}u_ae_b + e_ae_b\left(\frac{1}{3}\Theta + \Sigma\right) + N_{ab}\left(\frac{1}{3}\Theta -\frac{1}{2}\Sigma\right)\notag\\
&+\Omega\varepsilon_{ab},\label{4}\\
\nabla_ae_b&=-\mathcal{A}u_au_b + \left(\frac{1}{3}\Theta + \Sigma\right)e_au_b +\frac{1}{2}\phi N_{ab}\notag\\
&+\xi\varepsilon_{ab}.\label{444}
\end{align}
\end{subequations}
We also note the useful expression 
\begin{eqnarray}\label{redpen}
\hat{u}^a&=&\left(\frac{1}{3}\Theta+\Sigma\right)e^a.
\end{eqnarray}

The field equations for LRS spacetimes (we are interested in the case with vanishing cosmological constant) are given as propagation and evolution of the covariant scalars \cite{cc1}:

\begin{itemize}

\item \textit{Evolution}
\begin{subequations}
\begin{align}
\frac{2}{3}\dot{\Theta}-\dot{\Sigma}&=\mathcal{A}\phi- \frac{1}{2}\left(\frac{2}{3}\Theta-\Sigma\right)^2 - 2\Omega^2 + \mathcal{E} \notag\\
&- \frac{1}{2}\Pi - \frac{1}{3}\left(\rho+3p\right),\label{sube1}\\
\dot{\phi}&=\left(\frac{2}{3}\Theta-\Sigma\right)\left(\mathcal{A}-\frac{1}{2}\phi\right) + 2\xi\Omega \notag\\
&+ Q,\label{sube2}\\
\dot{\xi}&=-\frac{1}{2}\left(\frac{2}{3}\Theta-\Sigma\right)\xi \notag\\
&+ \left(\mathcal{A}-\frac{1}{2}\phi\right)\Omega,\label{sube3}\\
\dot{\Omega}&=\mathcal{A}\xi-\left(\frac{2}{3}\Theta-\Sigma\right)\Omega,\label{sube4}\\
\dot{\mathcal{H}}&=-3\xi\mathcal{E}-\frac{3}{2}\left(\frac{2}{3}\Theta-\Sigma\right)\mathcal{H}\notag\\
&+\Omega Q,\label{sube5}\\
\dot{\mathcal{E}}-\frac{1}{3}\dot{\rho}+\frac{1}{2}\dot{\Pi}&=-\left(\frac{2}{3}\Theta-\Sigma\right)\left(\frac{3}{2}\mathcal{E}+\frac{1}{4}\Pi\right)+\frac{1}{2}\phi Q\notag\\
&+3\xi\mathcal{H}+\frac{1}{2}\left(\frac{2}{3}\Theta-\Sigma\right)\left(\rho+p\right),\label{sube6}
\end{align}
\end{subequations}
\item \textit{Propagation}
\begin{subequations}
\begin{align}
\frac{2}{3}\hat{\Theta}-\hat{\Sigma}&=\frac{3}{2}\phi\Sigma + 2\xi\Omega + Q,\label{sube7}\\
\hat{\phi}&=-\frac{1}{2}\phi^2 + \left(\frac{1}{3}\Theta+\Sigma\right)\left(\frac{2}{3}\Theta-\Sigma\right)\notag\\
&+2\xi^2-\frac{2}{3}\rho-\mathcal{E}-\frac{1}{2}\Pi,\label{sube8}\\
\hat{\xi}&=-\phi\xi + \left(\frac{1}{3}\Theta+\Sigma\right)\Omega,\label{sube9}\\
\hat{\Omega}&=\left(\mathcal{A}-\phi\right)\Omega,\label{sube10}\\
\hat{\mathcal{H}}&=-\left(3\mathcal{E}+\rho+p-\frac{1}{2}\Pi\right)\Omega-3\phi\mathcal{H}\notag\\
&-Q\xi,\label{sube11}\\
\hat{\mathcal{E}}-\frac{1}{3}\hat{\rho}+\frac{1}{2}\hat{\Pi}&=-\frac{3}{2}\phi\left(\mathcal{E}+\frac{1}{2}\Pi\right)-\frac{1}{2}\left(\frac{2}{3}\Theta-\Sigma\right)Q\notag\\
&+3\Omega\mathcal{H}\label{sube12}
\end{align}
\end{subequations}
\item \textit{Evolution/Propagation}
\begin{subequations}
\begin{align}
\hat{\mathcal{A}}-\dot{\Theta}&=-\left(\mathcal{A}+\phi\right)\mathcal{A}-\frac{1}{3}\Theta^2+\frac{3}{2}\Sigma^2-2\Omega^2\notag\\
&+\frac{1}{2}\left(\rho+3p\right),\label{sube13}\\
\dot{\rho}+\hat{Q}&=-\Theta\left(\rho+p\right)-\left(2\mathcal{A}+\phi\right)Q\notag\\
&-\frac{3}{2}\Sigma\Pi,\label{sube14}\\
\dot{Q}+\hat{p}+\hat{\Pi}&=-\left(\mathcal{A}+\frac{3}{2}\phi\right)\Pi-\left(\frac{4}{3}\Theta+\Sigma\right)Q\notag\\
&-\left(\rho+p\right)\mathcal{A},\label{sube15}
\end{align}
\end{subequations}
\item \textit{Constraint}
\begin{eqnarray}\label{ggh10}
\mathcal{H}=3\Sigma\xi-\left(2\mathcal{A}-\phi\right)\Omega.
\end{eqnarray}

\end{itemize}

Furthermore, given a scalar \(\psi\) in LRS spacetimes, the dot and hat derivatives commute as follows \cite{cc1}:
\begin{eqnarray}\label{jun42}
\hat{\dot{\psi}}-\dot{\hat{\psi}}=-\mathcal{A}\dot{\psi}+\left(\frac{1}{3}\Theta+\Sigma\right)\hat{\psi}.
\end{eqnarray}

\subsection{Some definitions}
We define few useful notions. In the definitions and discussions that are to follow, \(k^a\) and \(l^a\) are respectively the outward and inward null normal vector fields to a leaf of such foliation defined by

\begin{eqnarray*}
k^a=\frac{1}{\sqrt{2}}\left(u^a+e^a\right)\text{  and  }l^a=\frac{1}{\sqrt{2}}\left(u^a-e^a\right),
\end{eqnarray*}
while \(\Theta_k\) and \(\Theta_l\) are the expansions of the congruences generated by \(k^a\) and \(l^a\) respectively.

\begin{definition}[\textbf{Trapped Surface}]\label{nggg}
A (future) trapped surface (TS) is a smooth, connected, closed, spacelike co-dimension \(2\) submanifold \(S\) of \(M\) such that the divergences, \(\Theta_k\) and \(\Theta_l\), of the congruences generated by the null normal vector fields \(k^a\) and \(l^a\) (\(k^a\) is the outgoing null normal vector field and \(l^a\) is the ingoing null normal vector field) respectively are everywhere negative on \(S\).  
\end{definition} 

\begin{definition}[\textbf{Marginally Trapped Surface}]\label{nggg0}
A marginally trapped surface (MTS) is a smooth, connected, closed, spacelike co-dimension \(2\) submanifold \(S\) of \(M\) such that \(\Theta_k\) is everywhere vanishing on \(S\) and \(\Theta_l\) is everywhere negative on \(S\). 
\end{definition}
In most cases, the definition of a marginally trapped surfaces is simply taken to be the case where \(\Theta_k\), vanishes.

\begin{definition}[\textbf{Marginally Trapped Tube}]\label{nggg1}
A marginally trapped tube (MTT) is a co-dimension 1 submanifold \(H\) of \(M\) which is foliated by MTS. 
\end{definition}
For more on the above definitions see the following references \cite{ib1,ash1,ash2,ash3}. In general, the signature of the induced metric on \(H\) will vary over \(H\). There are however cases where the signature is fixed all over \(H\). In such cases a spacelike marginally trapped tube is called a dynamical horizon (DH), a timelike marginally trapped tube is called a timelike membrane (TLM), and a null and non-expanding marginally trapped tube is called an isolated horizon (IH). 

\section{Result}\label{3}

We first recall some needed results for the proof of the main result of this paper.

For the LRS spacetimes, the outgoing null expansion, whose vanishing necessitates trapping, has been calculated as \cite{shef1}

\begin{eqnarray}\label{piapia1}
\Theta_k&=&\frac{1}{\sqrt{2}}\left(\frac{2}{3}\Theta-\Sigma+\phi\right).
\end{eqnarray}
So on any marginally trapped surface foliating the horizon one has \(\Theta_k=0\), i.e  

\begin{eqnarray}\label{jun41}
\frac{2}{3}\Theta - \Sigma + \phi=0.
\end{eqnarray}

Whether a horizon \(H\) is a timelike membrane, a dynamical horizon or an isolated horizon (or in general the causal character of an marginally trapped tube) can be determined by the function (see the references \cite{ib3,shef1} for more details)
\begin{eqnarray}\label{jun61}
C=\frac{\mathcal{L}_k\Theta_k}{ \mathcal{L}_l\Theta_k},
\end{eqnarray}
which is constant, where the operator \(\mathcal{L}_n\) denotes the Lie derivative along the vector field \(n^a\). Note that in general, the sign of \(C\) may vary over the marginally trapped tube, and as such the tube will have different components with different causal character (transitioning from spacelike to timelike of vice versa goes through a null component). However, there are cases where the sign of \(C\) remains fixed over \(H\). In such a case the MTT is called a dynamical horizon for \(C>0\), a timelike membrane and an isolated horizon if \(C=0\). For LRS spacetimes \eqref{jun61}, we calculate \(C\) explicitly as
\begin{eqnarray}\label{jun62}
C=\frac{2\left(\xi^2+2\xi\Omega-\Omega^2\right)-\left(\rho+p+\Pi\right)+2Q}{-2\left(\Omega^2+\xi^2\right)+\frac{1}{3}\left(\rho-3p\right)+2\mathcal{E}},
\end{eqnarray}
where we have noted \(\mathcal{L}_k\Theta_k=\dot{\Theta}_k+\hat{\Theta}_k\) and \(\mathcal{L}_l\Theta_k=\dot{\Theta}_k-\hat{\Theta}_k\), and used the equations \eqref{sube1}, \eqref{sube2}, \eqref{sube7} and \eqref{sube8} to calculate the dot and hat derivatives of the outgoing null expansion scalar \(\Theta_k\). 

The ingoing null expansion scalar was also calculated in \cite{shef1} as

\begin{eqnarray}\label{micc1}
\Theta_{l}=\frac{1}{\sqrt{2}}\left(\frac{2}{3}\Theta-\Sigma-\phi\right).
\end{eqnarray}

We now provide the following definition:

\begin{definition}\label{deffe1}
Let \(M\) be a spacetime manifold. A \(2\)-surface \(S\) in \(M\) is called minimally trapped if both the outgoing null expansion scalar \(\Theta_k\) and the ingoing null expansion scalar \(\Theta_l\) vanish everywhere on \(S\). An marginally trapped tube \(H\) will be called minimal if it is foliated by minimally trapped \(2\)-surfaces.
\end{definition}

\subsection{Main result}

We now state and prove the result of this paper.

\begin{theorem}\label{proo1}
An LRS spacetime in which at least one of the rotation or spatial twist is non-vanishing, cannot admit a non-minimal marginally trapped tube of the form \(\chi=X(t)\).
\end{theorem}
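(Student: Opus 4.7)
The plan is to convert the hypothesis that $H = \{\chi = X(t)\}$ is a non-minimal MTT into a scalar identity along $H$ and then exploit the full LRS field system to show that this forces either $\phi = 0$ on $H$ (so $H$ is minimal) or $\xi = \Omega = 0$ on $H$ (contradicting the hypothesis on the spacetime). By \eqref{piapia1} and \eqref{micc1}, the MTT condition $\Theta_k = 0$ gives the on-tube relation $\frac{2}{3}\Theta - \Sigma + \phi = 0$, so that $\Theta_l|_H = -\sqrt{2}\,\phi$ and non-minimality is equivalent to $\phi \not\equiv 0$ on $H$. Since every covariant LRS scalar depends only on $(t,\chi)$, the parametrisation $t \mapsto (t, X(t))$ of $H$ combined with $\Theta_k(t, X(t)) \equiv 0$ yields the tangential identity
\begin{equation*}
A\,\dot{\Theta}_k + X'(t)\,B\,\hat{\Theta}_k = 0 \quad \text{on } H,
\end{equation*}
where $A, B$ are the metric coefficients from \eqref{jan29191}.

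Next I would compute $\dot{\Theta}_k$ and $\hat{\Theta}_k$ on $H$ from the evolution and propagation equations \eqref{sube1}--\eqref{sube2} and \eqref{sube7}--\eqref{sube8}, substituting $\Sigma = \tfrac{2}{3}\Theta + \phi$ throughout; a direct calculation eliminates $\Theta$ and $\mathcal{A}$ and gives closed expressions whose sum and difference are the numerator and denominator of $C$ in \eqref{jun62}. In parallel, the same substitution in \eqref{sube3}--\eqref{sube4} and \eqref{sube9}--\eqref{sube10} reduces the rotation/twist system on $H$ to
\begin{eqnarray*}
\hat{\Omega}&=&(\mathcal{A} - \phi)\,\Omega,\\
\dot{\Omega}&=&\mathcal{A}\,\xi + \phi\,\Omega,\\
\hat{\xi}&=&-\phi\,\xi + (\Theta + \phi)\,\Omega,\\
\dot{\xi}&=&\tfrac{1}{2}\phi\,\xi + \left(\mathcal{A} - \tfrac{1}{2}\phi\right)\Omega,
\end{eqnarray*}
while the constraint \eqref{ggh10} becomes $\mathcal{H} = (2\Theta + 3\phi)\xi - (2\mathcal{A} - \phi)\Omega$. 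Applying the commutation identity \eqref{jun42} to each of $\xi$, $\Omega$, and if needed to $\phi$, then supplies integrability relations tying these quantities together on $H$.

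The heart of the argument is the elimination: combine the tangential identity for $\Theta_k$ with the on-tube $\xi, \Omega$ equations, the reduced form of \eqref{ggh10}, and the commutator identities to isolate a relation in which the fluid scalars $\rho, p, \Pi, Q, \mathcal{E}, \mathcal{H}$ together with the auxiliary $\mathcal{A}, \Theta$ all drop out, leaving only terms proportional to $\phi\,\xi$ and $\phi\,\Omega$. Non-minimality then forces $\xi = \Omega = 0$ at the offending points of $H$, contradicting the hypothesis that at least one of rotation or spatial twist is non-vanishing. The main obstacle I anticipate is the elimination itself: $X'(t)$ is a free function, $\mathcal{A}$ and $\Theta$ appear as free parameters in several of the equations above, and achieving the cancellation that isolates a clean $\phi\,\xi = \phi\,\Omega = 0$ statement will almost certainly require using the evolution, propagation, constraint and commutator equations in concert rather than any one subsystem in isolation.
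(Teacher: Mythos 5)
Your setup is sound as far as it goes---the on-tube relation \eqref{jun41}, the identification \(\Theta_l|_H=-\sqrt{2}\,\phi\) so that non-minimality means \(\phi\neq 0\), and the reduced forms of \eqref{sube3}--\eqref{sube4}, \eqref{sube9}--\eqref{sube10} and \eqref{ggh10} are all correct---but the proposal stops exactly where the proof has to happen. The step you yourself flag as the main obstacle, the elimination that is supposed to leave only \(\phi\xi\) and \(\phi\Omega\) terms, is never carried out, and there is no reason to expect it to close with the tools you list. The evolution, propagation and constraint equations together with the dot--hat commutator \eqref{jun42} are not sufficient: the paper's argument rests on an additional structural identity, \(\Omega\dot{\psi}=\xi\hat{\psi}\) for \emph{every} covariant scalar \(\psi\) (equation \eqref{joni8}, proved in \cite{say1} from the vanishing of all sheet (\(\delta\)-) derivatives of scalars in an LRS spacetime, i.e.\ from the commutator of two \(\delta\)-derivatives---an ingredient absent from your list). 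That identity yields the closed algebraic expressions \eqref{joni9} and \eqref{joni10} for \(\phi\) and \(Q\); evaluating \eqref{joni9} on the tube with \(\tfrac{2}{3}\Theta-\Sigma=-\phi\) gives at once \(\phi\left(\Omega/\xi+1\right)=0\), i.e.\ either \(\phi=0\) (minimal) or \(\Omega=-\xi\). Your differential-elimination scheme, with the free function \(X'(t)\) and the undetermined \(\mathcal{A}\), \(\Theta\) still floating, has no mechanism to produce this dichotomy, and indeed the degenerate subcases \(\Omega=0,\ \xi\neq0\) and \(\Omega\neq0,\ \xi=0\) (which the theorem must also cover) are handled in the paper purely by \eqref{joni9}, with no tangential identity needed.

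Even granting the dichotomy, the residual case \(\Omega=-\xi\neq0\) is not addressed by anything in your plan, and it is not disposed of by field-equation algebra alone. The paper kills it by invoking Theorem 1 of \cite{say1}: simultaneous nonzero rotation and spatial twist guarantee a homothetic Killing vector in the plane spanned by \(u^a\) and \(e^a\); the condition \(\Omega/\xi=\pm1\) of \eqref{jeje2} makes \(H\) null, hence a Killing horizon with \(C=0\); setting the numerator of \eqref{jun62} to zero and substituting \(\Omega=-\xi\) together with \eqref{joni10} collapses it to \(-4\xi^2=0\), contradicting \(\xi\neq0\). So the two essential inputs---the sheet-derivative identity \eqref{joni8} and the homothetic-Killing-vector/null-horizon argument---are both missing from your proposal; as written it is a programme rather than a proof, and the programme as stated is unlikely to terminate.
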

\begin{proof}
It was shown in \cite{say1} that, for any scalar \(\psi\) in LRS spacetime, the dot and hat derivatives satisfy the relation

\begin{eqnarray}\label{joni8}
\Omega\dot{\psi}=\xi\hat{\psi}.
\end{eqnarray}
This relation was used to obtain the following explicit algebraic expressions for \(\phi\) and \(Q\), obtained through the employment of the commutation relation \eqref{jun42}:

\begin{subequations}
\begin{align}
\phi&=\frac{\Omega}{\xi}\left(\frac{2}{3}\Theta-\Sigma\right),\label{joni9}\\ 
Q&=-\frac{\Omega\xi}{\left(\Omega^2+\xi^2\right)}\left(\rho+p+\Pi\right).\label{joni10}
\end{align}
\end{subequations}
Let \(H\) be an marginally trapped tube in \(M\) foliated by marginally trapped surfaces. We will show that if either one of \(\Omega\) and \(\xi\) is non-zero on \(H\), then either \(\left(2/3\right)\Theta-\Sigma=0\implies\phi=0\) or \(\phi=0\implies\left(2/3\right)\Theta-\Sigma=0\), in which case \(H\) is  minimal. 

Suppose we have that \(\Omega=0\) and \(\xi\neq 0\). Then, from \eqref{joni9} we have that \(\phi=0\), in which case we must have \(\left(2/3\right)\Theta-\Sigma=0\). Hence, \(H\) is minimal. On the other hand, if \(\Omega\neq0\) and \(\xi=0\), then \eqref{joni9} gives

\begin{eqnarray*}
\Omega\left(\frac{2}{3}\Theta-\Sigma\right)=0.
\end{eqnarray*}
Since \(\Omega\neq 0\), we must have \(\left(2/3\right)\Theta-\Sigma=0\), in which case \(\phi=0\). Hence \(H\) is also minimal.

Now, consider the case in which \(\Omega\neq0\) and \(\xi\neq0\). Then, on the horizon \eqref{joni9} becomes (using \eqref{jun41})

\begin{eqnarray}\label{jeje1}
\phi\left(\frac{\Omega}{\xi}+1\right)=0.
\end{eqnarray}
We know that \(\phi\neq 0\), since otherwise \(H\) would be minimal. Therefore we must instead have that

\begin{eqnarray}\label{jeje2}
\frac{\Omega}{\xi}=-1.
\end{eqnarray}

We also know that, in LRS spacetimes with simultaneous rotation and spatial twist, the existence of homothetic Killing vectors in the plane spanned by the unit tangent direction and the preferred spatial direction is guaranteed by Theorem \(1.\) of \cite{say1}. Therefore \(H\) also admits a homothetic Killing vector. The condition for \(H\) to be null is that \(\Omega/\xi=\pm 1\), clearly satisfied by \(H\) according to \eqref{jeje2}, which implies that \(H\) is a Killing horizon (see the reference \cite{say1} for more discussion). However, since \(H\) is null, one must have \(C=0\), which, from \eqref{jun62} gives

\begin{eqnarray}\label{jeje3}
2\left(\xi^2+2\xi\Omega-\Omega^2\right)-\left(\rho+p+\Pi\right)+2Q=0
\end{eqnarray}
Substituting \eqref{jeje2} into \eqref{jeje3} and \eqref{joni10} and simplifying, \eqref{jeje3} reduces to \(\xi^2=0\implies\xi=0\) (or \(\Omega^2=0\implies\Omega=0\)). The cases \(\xi=0\) and \(\Omega=0\) have already been shown to yield a minimal \(H\). The result therefore follows.\qed
\end{proof}

\section{Discussion}\label{4}

The result of Theorem \ref{proo1} severely constrains the spacetimes with locally rotational symmetry that can admit dynamical horizons and timelike membranes. In particular, Theorem \ref{proo1} tells us that, for spacetimes with locally rotational symmetry where all vector and tensor quantities vanish identically, rotation and spatial twist are obstacles to the existence of truly dynamical black holes. Hence, only the class II of these spacetimes with \(\Omega=0\) and \(\xi=0\) admit such horizons. Well known examples of this class include the Lemaitre-Tolman-Bondi (LTB) solution,  Schwarzschild spacetime (although the MTT here, a null horizon, is minimal since both expansion scalars are identically zero), the Oppenheimer-Snyder (OS) dust solution and the Robertson-Walker (RW) solution. (The LTB and RW solutions, for example, contain dynamical horizons foliated by non-minimal MOTS, and the OS dust collapse contains timelike membranes, also foliated by nonminimal MOTS. (These were recently studied in context of the 1+1+2 formulation \cite{shef1}.)

While this itself does not expound on physical application of the obtained result, it certainly adds to the growing literature on the existence problem for dynamical horizons in particular. 

In the numerical context, the coordinate dependence of \eqref{jan29191} already restricts the possible solutions in the LRS class of spacetimes that could be useful in modelling mergers. The result of this work appears to verify this, and rules out many LRS solutions which could possibly be used to model  black hole mergers. Two important subclasses are the spatially homogeneous LRS III class of spacetimes with \(\xi\neq0\) and \(\Omega\) and the LRS I solutions with \(\xi=0\) and \(\Omega\neq0\). A known example of the latter is the G\"{o}del rotating solution.

An immediate problem to consider would be the rigidity of the result of this paper, under perturbations, where one has an LRS background.

\section*{Acknowledgements}
Gratitude is expressed to the anonymous referee who provided invaluable comments and suggestions that have enhanced the clarity of the result of this paper. The author acknowledges support by the First Rand Bank, through the Department of Mathematics and Applied Mathematics, University of Cape Town, South Africa.

\end{document}